\documentclass[conference]{IEEEtran}
\usepackage{amsmath, array}
\usepackage{amsfonts}
\usepackage{fancyhdr}
\usepackage[cyr]{aeguill}
\usepackage{cite}
\usepackage{epsfig,graphics,graphicx,latexsym}
\usepackage{amsmath,array,amscd,amsthm,latexsym,amssymb,mathrsfs,syntonly,eucal}
\usepackage{cases, longtable}
\usepackage[usenames]{color}
\usepackage{url}

\newtheorem{theorem}{Theorem}

\newtheorem{discussion}[theorem]{Discussion}

\begin{document}

\title{Hash-and-Forward Relaying for Two-Way Relay Channel}

\author{
\authorblockN{Erhan Y\i{}lmaz and Raymond Knopp}
\authorblockA{EURECOM, Sophia Antipolis, France\\
\{yilmaz, knopp\}@eurecom.fr }  }

\maketitle

\begin{abstract}

This paper considers a communication network comprised of two nodes, which have no mutual direct communication links, communicating two-way with the aid of a common relay node (RN), also known as separated two-way relay (TWR) channel.

We first recall a cut-set outer bound for the set of rates in the context of this network topology assuming full-duplex transmission capabilities. Then, we derive a new achievable rate region based on hash-and-forward (HF) relaying where the RN does not attempt to decode but instead hashes its received signal, and show that under certain channel conditions it coincides with Shannon's inner-bound for the two-way channel \cite{shannon_61}. Moreover, for binary adder TWR channel with additive noise at the nodes and the RN we provide a detailed capacity achieving coding scheme based on structure codes.

\end{abstract}

\section{Introduction}
\subsection{Related Work on Two-Way Relaying \label{subsec:RelatedWork}}

Up to now, different transmission schemes are proposed for two-way relay (TWR) channels \cite{rankov_asilomar05, wu/chou_ciss05, knopp_izs06, larsson_vtc06, katti_isit07, kim/mitran_conf07, oechtering_ciss07, avestimehr_allerton08, gunduz_allerton08, schnurr_isit08}. However, the capacity region for general TWR channel remains open. The simplest transmission scheme for TWR channel consists of four phases where the two nodes transmit their messages to the relay node (RN) successively and then the RN decodes and forwards each mobile's message in the following two time slots. However, using ideas from network coding (NC) \cite{ahlswede_jnl00}, it is shown in \cite{wu/chou_ciss05} that the last two transmissions may be merged into a single transmission, resulting in three time slots and hence a pre-log factor of $2/3$ with respect to the sum-rate. The number of required time slots for the communication between the two nodes can be reduced even further to two time slots by allowing them to simultaneously access the RN \cite{popovski_icc06, oechtering_ciss07, zhang_mobihoc06, kim/mitran_conf07}.

In \cite{rankov_isit06, knopp_izs06, popovski_icc06} amplify-and-forward (AF), decode-and-forward (DF) and compress-and-forward (CF) based relaying strategies consisting only of two time slots are proposed for TWR channel. A NC scheme that requires three time slots is considered in \cite{wu/chou_ciss05} where in the first two time slots the mobiles send their messages to the RN in orthogonal time slots, the RN decodes both messages and then combines them by means of the bit-wise XOR operation and retransmits it to the mobiles. There the mobiles are assumed to use the bit-wise XOR operation on the decoded message and the own transmitted message to obtain the message sent from the other mobile. Requiring three time slots, this bit-wise XOR based TWR scheme provides a pre-log factor of $2/3$ with respect to the sum-rate. In \cite{katti_sigcomm07} an analog network coding (ANC) scheme, where the RN amplifies and forwards mixed received signals, is proposed and compared to the traditional and digital network coding (bit-wise XOR at the RN) schemes in terms of network throughput. In \cite{popovski_icc06} the denoise-and-forward (DNF) relaying is proposed for TWR channel where the RN removes the noise from the combined mobiles' messages (on the multiple-access channel) before broadcasting and compared with AF, DF based TWR schemes as well as the traditional four phase scheme.

\begin{figure}[t]
\centering
\includegraphics[width=3.5in, height=2.2in, viewport = 10 90 700 490, clip]{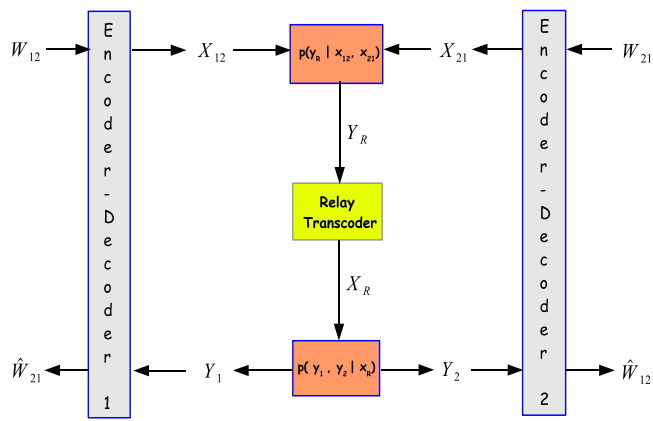}
\caption{Two-user two-way relay network.} \label{c6f:twoWayRelayEncoderDecoder}
\end{figure}

\subsection{Contributions \label{subsec:contributions}}

In this paper, we consider a TWR channel where two separated nodes seek to communicate, and can do so, via a common RN. This kind of scenario can occur in satellite communications where different ground stations want to exchange information or in public safety networks where different intervening entities (e.g. fire-fighters) want to communicate with each other to gain information about the current status at different parts of the disaster area \cite{knopp_meeting07}.

Throughout the paper we focus on discrete memoryless TWR channel model assuming full-duplex transmission capabilities at the communicating nodes. A cut-set outer bound for the set of rates in the context of this network topology is recalled and an achievable rate region based on DF relaying is provided where a message-level network coding enabling codebook size reduction at the RN is used. Then, a novel achievable rate region based on random binning, i.e. hash-and-forward (HF) relaying, is provided where the RN does not attempt to fully (or partially) decode but rather hashes its received signal so as to reduce the amount of information needed to send in the DL. Finally, we study a binary adder channel with additive noise at the nodes and the RN and show that the outer bound is virtually achievable by \emph{physical-layer} network coding, using group codes and partial decoding at the RN.

\section{Information-Theoretic System Model}

Consider the discrete-memoryless network in Fig.~\ref{c6f:twoWayRelayEncoderDecoder} comprising a pair of nodes communicating two-way via a RN. Node $a$ generates $n$-dimensional codewords $X_{ab}^n$ based on an index $W_{ab}\in\left\{1,2,\cdots,2^{nR_{ab}}\right\}$ where $R_{ab}$ is the information rate in bits/dimension in the direction $ab$, for $a \in \{1, 2\}$ and $b \in \{1, 2\} \setminus a$.
Each letter of the transmitted codewords, $X_{ab,i},i=1,2,\cdots,n$ belongs to an alphabet $\mathcal{X}_{ab}$ and is chosen according to a deterministic encoding function $X_{ab,i}=f_{ab,i}\left(W_{ab},Y_{a}^{i-1}\right)$ which includes the possibility for exploiting the past observations of the relay channel output (downlink). The RN also uses a deterministic transcoder which generates an $n$-dimensional output sequence $X_{R}^n$ based on the observed noisy multiuser (uplink) channel output $Y_{R}^n$. The transcoding function at the RN is causal and written as $X_{R,i}=f_{R,i}\left(Y_{R}^{i-1}\right)\in\mathcal{X}_R$, where $\mathcal{X}_R$ is the alphabet of the RN transmitter. The multiuser channel is memoryless and successive outputs are identically distributed according to the conditional probability $p(y_R|x_{12},x_{21})$. The observed sequences at the nodes, $Y_{1}^n$ and $Y_{2}^n$, are independent conditioned on the RN output and identically distributed according to the conditional probability $p(y_1,y_2|x_R)=p(y_1|x_R)p(y_2|x_R)$.

Receiver $a$ decodes its message based on the observed sequence $Y_{a}^n$ using a decoding function $\hat{W}_{ba}=g\left(Y_{a}^n,W_{ab}\right)$ with probability of error $P_{e,a}=\Pr\left(\hat{W}_{ba}\neq W_{ba}\right)$. 

\section{Outer Bound}\label{sec:outerbound}

Let us first derive an outer-bound on the capacity region from first principles. It is shown in \cite{yilmaz_thesis10} that by using Fano's inequality twice, we have the following outer-bound on achievable rates
\begin{equation}
R_{12} \leq \min\left\{\mathrm{I}\left(X_{12};Y_R|X_{21}\right),\mathrm{I}\left(X_R;Y_2\right)\right\}\label{eq:outerbound}.
\end{equation}

The first term (uplink) in \eqref{eq:outerbound} corresponds to the outer-bound of the two-way channel with a common output \cite{shannon_61} with which no coding strategy is known to coincide, except for additive channels and some special cases \cite{hekstra_jnl89}.  In our system, the achievable rates would be limited to this rate if the DL channels were capable enough to allow the RN to forward a sufficient characterization of $Y_R$ to the nodes.

\begin{discussion}
In the case of a noiseless DL channel our system boils down to a two-way channel with a common output \cite{shannon_61}. For a weak DL, the information rates will be limited by the second term in \eqref{eq:outerbound}. Again, were the uplink channel noiseless, an outer-bound to the achievable rates would be given by the second term.
\end{discussion}

\section{Forwarding Strategies at the Relay Node} \label{c6sec:DFHFCF}

At the node-1 and node-2 assuming coding schemes which do not exploit the DL signal in the encoding of the UL data so that the data streams are independent, we now consider coding schemes at the RN which do not attempt to decode or partially decode the transmitted data streams, but rather perform binning (hashing) of the received sequence. UL codebooks are generated randomly at each node according to $p(x_{12})$ and $p(x_{21})$. At the RN we will consider two different encoding strategies, namely DF and HF. Prior to discussing these further, we first look at DL coding for degraded broadcast channels (BCs) with successive refinement which will be needed for the probability of error analysis for data transmission on the DL.

\subsection{Degraded BC with successive refinement\label{sec:dbc_suc_refinement}}

We consider the case where the DL channel is degraded and a DL channel code where the two users are required to decode the same information but with different degrees of refinement (here two). By this we mean that the user with the weaker channel only receives the heavily-protected data stream, while the user with the stronger channel receives both. This is the general BC where there is common information for both users and extra information for the stronger user. We make use of standard multi-level coding \cite{book:Cover}, and let $R_{R,1}$ and $R_{R,2}$ denote the information rates of the RN to nodes 1 and 2 respectively, and assume further that $R_{R,1}\geq R_{R,2}$. The random codebook, $\{X_{R}(i,j), i = 1, 2, \cdots, 2^{n(R_{R,1} - R_{R,2})}, \;\;j = 1, 2, \cdots,2^{nR_{R,2}}\}$, is generated according to the distributions $p(u)$ and $p(x_{R}|u)$ in the standard-way so that the achievable DL rates are given by the convex hull (with respect to the parameter $\alpha$ and $p(x_{R}, u)$) of
\begin{align}
&(R_{R,1},R_{R,2}) = \displaystyle \big\{(R_{R,1},R_{R,2}): \notag \\
&\quad\quad\quad \operatornamewithlimits{\arg \max}_{{\footnotesize \begin{array}{c} p(x_{R}, u) \\ 0\leq\alpha\leq 1 \end{array}}} \{\alpha (R_{R,1} - R_{R,2})+(1-\alpha)R_{R,2} \} \big\}  \notag,
\end{align}
where
\begin{subequations}  \label{eq:dbc_suc_refinement_rates}
\begin{align}
R_{R,2}           &= \mathrm{I}(U; Y_2),         \label{eq:dbc_suc_refinement_rates1} \\
R_{R,1} - R_{R,2} &= \mathrm{I}(X_R; Y_1|U).     \label{eq:dbc_suc_refinement_rates2}
\end{align}
\end{subequations}

\vskip0.1in

\subsection{Decode-and-Forward Relaying} \label{ch6:subsec:DF}

In this section, we provide a brief view of the achievable rates for the digital single-relay network with two communicating nodes, wherein a generalized form of network-coding for discrete-memoryless channels is used.

The DL channel is a classical BC except for the fact that the decoders have side information to exploit, namely the transmitted codeword indices that they themselves sent during the UL portion. For simplicity, we assume that the two channel outputs are conditionally independent so that they can be separated into two transition probabilities $p(y_i|x_R), i = 1, 2$. Each node decodes the received sequence $y^n_1$ ($y^n_2$) using the side information from its own transmission to yield the estimates $\hat{\hat{W}}_{21}$ ($\hat{\hat{W}}_{12}$). In the following theorem, we provide an achievable rate region for the above channel model.

\begin{theorem}\label{th:DF_achievable_Rates}
An achievable rate region for the two-user single relay network using DF relaying strategy is given by the closure of the following set of inequalities
\begin{subequations} \label{eq:rate_df}
\begin{align}
R_{12}          &\;\leq \; \min \left\{\mathrm{I}(X_{12}; Y_R|X_{21}), \mathrm{I}(X_R; Y_2) \right\}, \label{eq:rate_df1} \\
R_{21}          &\;\leq \; \min \left\{\mathrm{I}(X_{21}; Y_R|X_{12}), \mathrm{I}(X_R; Y_1) \right\}, \label{eq:rate_df2} \\
R_{12} + R_{21} &\;\leq \; \mathrm{I}(X_{12}, X_{21} ; Y_R).                                          \label{eq:rate_df3}
\end{align}
\end{subequations}
\end{theorem}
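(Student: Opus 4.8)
The plan is to prove achievability by a block-Markov random-coding argument that cleanly separates the uplink multiple-access phase from the downlink broadcast phase. Because we restrict attention to codes whose uplink codewords do not depend on the downlink observations, the streams $X_{12}^n$ and $X_{21}^n$ are independent, and the relay first fully decodes the pair $(W_{12},W_{21})$ from the uplink and then re-encodes a network-coded description of them on the downlink. Implementing this over $B$ transmission blocks in block-Markov fashion lets the uplink and downlink operate concurrently under the full-duplex assumption, with a rate loss of order $1/B$ that vanishes as $B\to\infty$; it therefore suffices to characterize the two phases separately and intersect the resulting constraints, whose intersection is exactly \eqref{eq:rate_df}.

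For the uplink I would generate $\{x_{12}^n(w_{12})\}$ and $\{x_{21}^n(w_{21})\}$ independently, i.i.d.\ according to $p(x_{12})$ and $p(x_{21})$, and have the relay apply joint-typicality decoding to the memoryless multiple-access channel $p(y_R|x_{12},x_{21})$. Splitting the relay error event into the three standard cases (only $w_{12}$ wrong, only $w_{21}$ wrong, both wrong) and bounding each by the usual counting argument \cite{book:Cover} shows that $\Pr\{(\hat W_{12},\hat W_{21})\neq(W_{12},W_{21})\}\to 0$ provided $R_{12}\le\mathrm{I}(X_{12};Y_R|X_{21})$, $R_{21}\le\mathrm{I}(X_{21};Y_R|X_{12})$ and $R_{12}+R_{21}\le\mathrm{I}(X_{12},X_{21};Y_R)$. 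These furnish the first arguments of the two minima together with the full sum-rate bound, i.e.\ the uplink part of \eqref{eq:rate_df1}, \eqref{eq:rate_df2} and all of \eqref{eq:rate_df3}.

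For the downlink the relay, now knowing both messages, transmits a codeword $x_R^n$ from a codebook indexed by $(W_{12},W_{21})$, which via the degraded-BC successive-refinement construction of Section~\ref{sec:dbc_suc_refinement} (whose common layer is the bit-wise combination of the two messages) can be reduced to size $2^{n\max(R_{12},R_{21})}$ without loss. The crucial point is that each node decodes \emph{with side information}: node~$2$, already knowing $W_{21}$, need only resolve $W_{12}$, so its effective number of competing codewords is $2^{nR_{12}}$ and joint-typicality decoding succeeds whenever $R_{12}\le\mathrm{I}(X_R;Y_2)$; symmetrically node~$1$ knows $W_{12}$ and recovers $W_{21}$ whenever $R_{21}\le\mathrm{I}(X_R;Y_1)$, the layering of \eqref{eq:dbc_suc_refinement_rates} being what allows a single transmitted codeword to serve both decoders. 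This supplies the second arguments of the two minima in \eqref{eq:rate_df1} and \eqref{eq:rate_df2} and, notably, imposes no sum constraint on the downlink.

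The main obstacle I anticipate is precisely this downlink analysis: one must argue rigorously that the receivers' own-message side information collapses the broadcast task into two independent point-to-point decoding problems, so that the unwanted message carries no rate penalty and the layered rates reduce to $\mathrm{I}(X_R;Y_2)$ and $\mathrm{I}(X_R;Y_1)$; one must also check that the network-coding construction is symmetric under interchanging the two nodes when $R_{21}\ge R_{12}$, and that the conditional independence $p(y_1,y_2|x_R)=p(y_1|x_R)p(y_2|x_R)$ is what decouples the two marginal decoders. Once both per-phase error probabilities are shown to vanish, a union bound over the uplink and downlink error events, followed by the standard convex-hull/time-sharing argument over the admissible input distributions $p(x_{12})p(x_{21})$ and $p(u)p(x_R|u)$, yields the closure of \eqref{eq:rate_df} and completes the proof.
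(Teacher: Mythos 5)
Your proposal follows essentially the same route as the paper: joint-typicality MAC decoding at the relay yields the three uplink constraints, and a network-coded downlink codebook of cardinality $2^{n\max(R_{12},R_{21})}$, decoded with each node's own-message side information, yields $R_{21}\le \mathrm{I}(X_R;Y_1)$ and $R_{12}\le \mathrm{I}(X_R;Y_2)$ with no downlink sum constraint. The only cosmetic difference is that the paper uses a flat two-dimensional codebook indexed by $\left(\left\lfloor \hat W_{12}2^{-nR_{21}}\right\rfloor,\; \hat W_{12}\oplus\hat W_{21}\right)$, letting the side information directly restrict each receiver's search space, rather than routing the argument through the superposition/successive-refinement layering of Section~\ref{sec:dbc_suc_refinement} (which, taken literally, would give the weak user only $\mathrm{I}(U;Y_1)$ instead of $\mathrm{I}(X_R;Y_1)$).
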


\begin{proof}
The proof which is briefly explained in the following can be found in \cite{yilmaz_thesis10}.

The RN employs a multi-user receiver to decode the transmitted codeword indices yielding the estimates $(\hat{W}_{12}, \hat{W}_{21})$ at its output. without loss of generality, we assume $R_{12} \geq R_{21}$. Based on these indices, in the downlink phase, it then encodes the two indices using a two-dimensional codebook $\mathcal{X}_R = \{X^n_R(W_R^{(1)}, W_R^{(2)}), W_R^{(1)}=1, \ldots, 2^{n(R_{12}-R_{21})}, W_R^{(2)}=1, \ldots, 2^{n R_{21}}\}$. Here we note that the cardinality of the RN's codebook is at most $2^{n R_{12}}$.

To send indices $W_R^{(1)}$ and $W_R^{(2)}$, the RN chooses $X^n_R(W_R^{(1)}, W_R^{(2)})$ where $W_R^{(1)}=\left\lfloor \hat{W}_{12} 2^{-n R_{21}}\right\rfloor, \; W_R^{(2)}=\hat{W}_{12} \oplus \hat{W}_{21}$, where $x \oplus y$ denotes $(x+y) \; \mathrm{mod} \; 2^{n R_{21}}$.

At the receiver $1$ (the weak receiver), who has the side information $W_{12}$, the unique $\hat{\hat{W}}_{21}$ is chosen such that
\begin{align}
\{X^n_R(\left\lfloor W_{12} 2^{-n R_{21}}\right\rfloor, W_{12}~\oplus~\hat{\hat{W}}_{21}),Y^n_1 \}~\in~A^{n}_{\epsilon, 1}
\end{align}
where $A^{n}_{\epsilon, k}$ denotes the set of jointly-typical sequences $\{x_R(w_R^{(1)}, w_R^{(2)}), y_k\}$, $k=1,2$. Due to the side information $W_{12}$, $W_R^{(1)}$ is known to receiver $1$; hence the cardinality of the search space is limited to $2^{n R_{21}}$. If none or more than one exist an error is declared. The decoded index is then $W_{12} \oplus \hat{\hat{W}}_{21}$.

At the receiver $2$ (the stronger receiver), who has the side information $W_{21}$, the unique $\hat{\hat{W}}_{12}$ is chosen such that
\begin{align}
\{X^n_R(\left\lfloor \hat{\hat{W}}_{12} 2^{-n R_{21}}\right\rfloor, \hat{\hat{W}}_{12}~\oplus~W_{21}),Y^n_2 \}~\in~A^{n}_{\epsilon, 2}.
\end{align}
Note that the cardinality of the search space is $2^{n R_{12}}$. If none or more than one exist an error is declared. Due to the side information $W_{21}$, the decoded index at the receiver $2$ is given by $\hat{\hat{W}}_{12} 2^{-n R_{21}} + \hat{\hat{W}}_{12} \oplus W_{21}$.
\end{proof}

\subsection{Hash-and-Forward Relaying}

If the RN does not attempt to decode the received sequence $Y_R^n$, it may still attempt to reduce the amount of information needed to convey it to both users, and moreover with a higher degree of refinement for the user with the stronger DL channel. We will proceed similarly to distributed source coding as originally described by Slepian and Wolf in \cite{slepian_jnl73}. This was recently referred to as {\em hashing} at the RN, resulting in an {\em hash-and-forward (HF)} transmission scheme \cite{cover_isit07}.

\begin{theorem}\label{th:HF_achievable_Rates}
An achievable rate region for the discrete memoryless TWR network with HF relaying strategy is given by the closure of the following set of inequalities
\begin{subequations} \label{eq:hash_region}
\begin{align}
&R_{12} \leq \min\big\{\mathrm{I}(X_{12};Y_R|X_{21}), \notag\\
&\hskip1.2in \left[\mathrm{I}(U;Y_2)-H(Y_R|X_1,X_2)\right]^+ \big\} \label{eq:rate_hf1} \\
&R_{21} \leq \min \big\{\mathrm{I}(X_{21};Y_R|X_{12}), \notag\\
&\hskip0.3in \left[\mathrm{I}(X_R;Y_1|U) + \mathrm{I}(U;Y_2)-H(Y_R|X_1,X_2)\right]^+\big\} \label{eq:rate_hf2}
\end{align}
\end{subequations}
for UL joint probability $p(x_{12})p(x_{21})p(y_R|x_{12},x_{21})$ and the degraded BC with successive refinement defined in Section-\ref{sec:dbc_suc_refinement}.
\end{theorem}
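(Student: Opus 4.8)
The plan is to combine a Slepian--Wolf style random binning of the relay observation $Y_R^n$ with the degraded-BC successive-refinement code of Section~\ref{sec:dbc_suc_refinement}, and then to let each node decode its desired message by testing joint typicality between the reconstructed relay observation and the node's own (known) uplink codeword. First I would generate the uplink codebooks $X_{12}^n(W_{12})$ and $X_{21}^n(W_{21})$ i.i.d.\ according to $p(x_{12})$ and $p(x_{21})$, and assign to every $y_R^n$ a hash (bin index) chosen uniformly at random. The hashing is organized into two nested layers, so that a coarse index can be carried to the weak DL receiver (node~2) at rate $I(U;Y_2)$, while the full index reaches the strong DL receiver (node~1) at rate $I(U;Y_2)+I(X_R;Y_1|U)$ --- precisely the two refinement rates \eqref{eq:dbc_suc_refinement_rates} that the degraded BC makes available.

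For encoding, the two nodes transmit their uplink codewords; the relay observes $Y_R^n$, computes its layered bin index, and transmits it on the DL using the successive-refinement code. For decoding, each node first recovers its bin index from its DL output, a step that fails with vanishing probability by the achievability established in Section~\ref{sec:dbc_suc_refinement}. Conditioned on a correct bin index, node~2 searches for the unique $\hat W_{12}$ such that some $y_R^n$ in its coarse bin is jointly typical with $(X_{12}^n(\hat W_{12}),X_{21}^n(W_{21}))$, exploiting its side information $W_{21}$, and node~1 performs the symmetric test with its finer bin and side information $W_{12}$.

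The heart of the argument is the error analysis, which at node~2 I would split into two events. The first is that the \emph{true} $Y_R^n$, which always lies in the correct bin, is jointly typical with a wrong codeword $X_{12}^n(\hat W_{12}')$; since this codeword is independent of the true pair $(X_{21}^n,Y_R^n)$ and $X_{12}\perp X_{21}$, each such event has probability about $2^{-nI(X_{12};Y_R|X_{21})}$, so a union bound over the $2^{nR_{12}}$ messages yields the constraint $R_{12}<I(X_{12};Y_R|X_{21})$. The second is that some \emph{other} sequence $y_R'^n$ in the same bin is jointly typical with a wrong codeword; for a fixed typical pair $(x_{12}',x_{21})$ there are about $2^{nH(Y_R|X_{12},X_{21})}$ candidate $y_R'^n$, each landing in the given bin with probability $2^{-nR_{\mathrm{bin},2}}$, so the union bound over messages and bin-mates requires $R_{12}<R_{\mathrm{bin},2}-H(Y_R|X_{12},X_{21})$ with $R_{\mathrm{bin},2}=I(U;Y_2)$. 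Taking the minimum of the two and clipping at zero gives \eqref{eq:rate_hf1}; repeating the same two-event analysis at node~1 with $R_{\mathrm{bin},1}=I(U;Y_2)+I(X_R;Y_1|U)$ gives \eqref{eq:rate_hf2}.

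The main obstacle I anticipate is the correct interleaving of the two-layer nested binning with the joint-typicality message decoder: one must verify that the coarse and fine bin rates actually delivered by the successive-refinement BC code are exactly $I(U;Y_2)$ and $I(U;Y_2)+I(X_R;Y_1|U)$, and that conditioning on a correctly recovered bin index does not disturb the independence invoked in the two counting arguments above. Once those points are settled, the remainder reduces to routine AEP and union-bound estimates.
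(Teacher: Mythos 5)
Your proposal is correct and follows essentially the same route as the paper: random binning of the typical $Y_R^n$ into two nested layers matched to the successive-refinement rates $I(U;Y_2)$ and $I(U;Y_2)+I(X_R;Y_1|U)$, followed by list/joint-typicality decoding with each node's own uplink codeword as side information, and a union bound over the two error events (wrong codeword typical with the true $Y_R^n$, and wrong codeword typical with a bin-mate) yielding exactly the constraints in \eqref{eq:hash_region}. The only cosmetic difference is that you count bin-mates directly via $2^{nH(Y_R|X_{12},X_{21})}\cdot 2^{-nR_{\mathrm{bin}}}$ whereas the paper bounds the list size $N_k\approx 2^{n(H(Y_R)-R_{R,k})}$ and multiplies by $2^{-nI(X_{12},X_{21};Y_R)}$; the exponents coincide.
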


\begin{proof}
The set of $\epsilon-$typical sequences $Y_R^n$ is partitioned into $2^{nR_{R,1}}$ bins $B_{ij},i=1,\ldots,2^{n(R_{R,1}-R_{R,2})}$, $j=1,\ldots,2^{n R_{R,2}}$, such that $\bigcup_{i,j} B_{ij}=A_\epsilon^n$ and $B_{ij}\bigcap B_{i'j'}=\emptyset,\mathrm{for}\;i\neq i' \mathrm{or}\;j\neq j'$. Let $\mathcal{B}(y_R^n)=(i,j)$ be a random hash-function which assigns a received sequence to a particular bin.

Encoding at the RN is done in two steps. First, if $y_R^n\in A_\epsilon^n$, we hash $y_R^n$ and let $W_{R,1}=i$ and $W_{R,2}=j$. If $y_R^n \notin A_\epsilon^n$, we set $(W_{R,1},W_{R,2})=(\mathrm{e},\mathrm{e})$ to indicate an error condition at the RN encoder. Then, we generate the transmitted sequence $X_R^n$ according to the multilevel-coding strategy for the (degraded) BC.

Decoder 1 creates a list, $L_1(y_R^n)$ of candidate $y_R^n$ based on the decoded $(\hat{i},\hat{j})$. The number of candidates in the list, $N_1(y_R^n)$ is bounded by
\begin{equation}
2^{n(H(Y_R)-R_{R,1}-\epsilon)}\leq N_1(y_R^n)\leq2^{n(H(Y_R)-R_{R,1}+\epsilon)}.
\end{equation}
Similarly, decoder 2 has a list based solely on $\hat{j}$, $L_2(y_R^n)$, for which the number of elements is bounded by
\begin{equation}
2^{n(H(Y_R)-R_{R,2}-\epsilon)}\leq N_2(y_R^n)\leq2^{n(H(Y_R)-R_{R,2}+\epsilon)}.
\end{equation}
Knowing $x_{12}^n$, decoder 1 tries to find an $x_{21}^n$ such that $(x_{12}^n,x_{21}^n, y^n)\in A_\epsilon^n$ for at least one $y^n\in L_1(y_R^n)$. If more than one $x_{21}^n$ or none are jointly $\epsilon$-typical, then an error is declared. Decoder 2 proceeds similarly and tries to find an $x_{12}^n$ knowing $x_{21}^n$ such that $(x_{12}^n, x_{21}^n, y^n)\in A_\epsilon^n$ for at least one $y^n \in L_2(y_R^n)$.

Assuming $(W_{12},W_{21}) = (1,1)$, the probability of decoding error (conditioned on receiving $(i,j)$ without error and $i\neq\mathrm{e},j\neq\mathrm{e}$) for decoder 1 is given by
\begin{align}
P_e^{(1)} &\leq \Pr\{(x_{12}^n(1),x_{21}^n(1),y_R^n)\notin A_\epsilon^n\} \notag\\
&\quad + 2^{nR_{21}} \left[\Pr\{(x_{12}^n(1),x_{21}^n(i\neq 1),y_R^n)\in A_\epsilon^n\} \right. \notag \\
&\quad \left. + N_1(y_R^n) \Pr\{(x_{12}^n(1),x_{21}^n(i\neq 1),{y'_{R}}^n) \in A_\epsilon^n \}\right] \notag\\
&\quad +\Pr\{(\hat{W}_{R,1},\hat{W}_{R,2})\neq(i,j)\} + \Pr\{W_{R,1}=\mathrm{e}\}. \label{eq:decoder1_hash_pe}
\end{align}
The first element in the sum can be made arbitrarily small by increasing $n$. The probability in the second term is the probability over the random ensemble of codebooks that an $x_{21}^n(i), i\neq 1$ is jointly $\epsilon-$typical with the true RN output (which is always in $L_1(y_R^n)$ if $(i,j)$ are received without error at the RN) and is given by
\begin{align}
&\Pr\left\{(x_{12}^n(1),x_{21}^n(i),y_R^n)\in A_\epsilon^n\right\} \notag\\
&\quad\quad\quad = \sum_{\footnotesize (x_{12}^n(1),x_{21}^n(i),y_R^n)\in A_\epsilon^n} p(x_{12}^n(1),y_R^n)) p(x_{21}^n(i))\notag\\
&\quad\quad\quad \leq 2^{n(H(X_{12},X_{21},Y_R)-H(X_{12},Y_R)-H(X_{21})-3\epsilon)} \notag\\
&\quad\quad\quad \leq 2^{-n(\mathrm{I}(X_{21};Y_R|X_{12})+3\epsilon)}.
\end{align}
The probability in the third term reflects the event that another ${y'_{R}}^n \in L_1(y_R^n)$ sequence from the random list of candidates is jointly $\epsilon-$typical with $x_{12}(1)$ and $x_{21}(i)$.  Note that this sequence is independent of both code sequences and thus
\begin{align}
&\Pr\left\{(x_{12}^n(1),x_{21}^n(i),{y'_R}^n)\in A_\epsilon^n\right\} \notag\\
&\quad\quad\quad = \sum_{\small(x_{12}^n(1),x_{21}^n(i),{y_R^{'}}^{n}) \in A_\epsilon^n} p(x_{12}^n(1)) p({y'_R}^n)) p(x_{21}^n(i))\notag\\
&\quad\quad\quad \leq 2^{n(H(X_{12},X_{21},Y_R)-H(X_{12})-H(Y_R)-H(X_{21})-4\epsilon)} \notag\\
&\quad\quad\quad \leq 2^{-n(\mathrm{I}(X_{21},X_{12};Y_R)-4\epsilon)}.
\end{align}
Combining the two probabilities and the size of the list in \eqref{eq:decoder1_hash_pe} yields
\begin{align}
P_e^{(1)} &\leq \Pr\{(\hat{i},\hat{j}) \neq(i,j)\} + 2 \epsilon + 2^{n(R_{21}-\mathrm{I}(X_{21};Y_R | X_{12}) - 3 \epsilon)} \notag \\
&\quad\quad + 2^{n(R_{21}-R_{R,1} + H(Y_R|X_{12},X_{21})-5\epsilon)}.
\end{align}
Proceeding in an identical fashion for decoder 2 yields
\begin{align}
P_e^{(2)} &\leq \Pr\{\hat{j}\neq j\} + 2 \epsilon + 2^{n(R_{12}-\mathrm{I}(X_{12};Y_R|X_{21}) - 3 \epsilon)} \notag \\
          &\quad\quad + 2^{n(R_{12}-R_{R,2} + H(Y_R | X_{12},X_{21}) - 5 \epsilon)}.
\end{align}
We now turn to the remaining error event in the overall error probability, namely the event that the bin indices $(i,j)$ for decoder 1 and $j$ for decoder 2 are incorrectly decoded. For the degraded BC with successive refinement (see Section-\ref{sec:dbc_suc_refinement}) we have the two error probabilities vanish if \eqref{eq:dbc_suc_refinement_rates} is satisfied.
\end{proof}

\begin{discussion}
Consider the special case where $p(y_1|x_R) = p(y_2|x_R)$, and thus $R_{R,1} = R_{R,2} = \mathrm{I}(X_R;Y_1)=\mathrm{I}(X_R;Y_2)$. Assume further the DL channels are very strong in the sense that
\begin{align*}
\mathrm{I}(X_R;Y_1) &\geq \; \max\left\{\mathrm{H}(Y_R|X_{12}), \mathrm{H}(Y_R|X_{21})\right\}.
\end{align*}
Under these conditions, the overall rate region coincides with Shannon's inner-bound for the two-way channel\cite{shannon_61} with a common output (which is not tight
due to the statistical independence of the input sequences) and is achievable by binning at the RN in the sense of Slepian-Wolf \cite{slepian_jnl73} with the only difference being that the {\em two} destinations now recover $Y_R$ without error using their own DL side information from the UL sequences.
\end{discussion}

\begin{figure}
\centering
\includegraphics[width=\linewidth, height=2.2in]{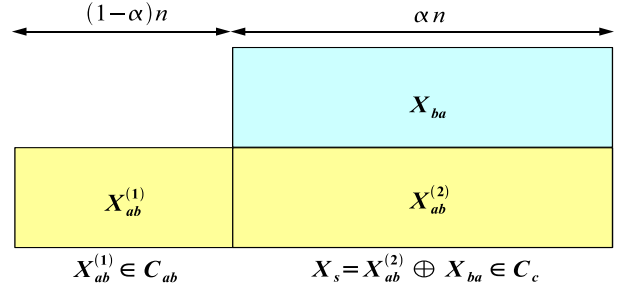}
\caption{Binary Adder Channel: Codebook Time-Sharing. \label{c6f:twoWayRelayTimeSharing}}
\end{figure}

\subsection{The capacity region of Binary Adder Channel}

We now restrict our treatment to a specific additive channel models, namely the binary adder channel. For this channel model, we will provide capacity achieving coding strategy based on linear group codes.

The input/output relationships for the binary adder channel are given by
\begin{align}
y_{R,n} &= x_{12,n} \oplus x_{21,n}\oplus z_{R,n}, \notag\\
y_{i,n} &= x_{R,n}  \oplus z_{i,n}, \quad i=1,2
\end{align}
where all variables are binary and $x \oplus y$ denotes the modulo-2 sum of $x$ and $y$. The probability of taking on the value 1 for the noise terms are denoted $\epsilon_{R}, \epsilon_{1},\epsilon_{2}$. Here the outer bound from Section-\ref{sec:outerbound} is
\begin{align}
R_{12} &\leq \min\left\{\mathrm{I}\left(X_{12};Y_R|X_{21}\right), \mathrm{I}\left(X_R;Y_2\right)\right\} \notag\\
       &= \min\big\{\mathrm{H}\left(Y_R|X_{21}\right)-\mathrm{H}\left(Y_R|X_{12},X_{21}\right), \notag\\
       &\hspace{1.3in} \mathrm{H}\left(Y_2\right)-\mathrm{H}\left(Y_2|X_R\right)\big\} \notag\\
       &\leq \min\left\{1-\mathcal{H}\left(\epsilon_{R}\right), 1-\mathcal{H}\left(\epsilon_{2}\right)\right\} \label{eq:bsc-outer}
\end{align}
where the last inequality becomes equality when the input distributions are uniform, i.e., $p(x_{12} = 1) = p(x_{ba} = 1) = 1/2$.

To show the achievability of \eqref{eq:bsc-outer} we assume that uplink and DL encoding occur in subsequent periods of $n$ output symbols, where $n$ is the length of code sequences. That is to say that after decoding a message from a group of $n$ symbols, the RN transmits the DL message while it receives the next UL message.  Suppose without loss of generality that $R_{12}\geq R_{21}$.

Consider two random codebooks $C_{12}$ with rate $R_{12}-R_{21}$ and $C_{c}$ with rate $R_{ba}$. $C_{c}$ is the {\em common codebook}. We now time-share between both codebooks as shown in Fig.~\ref{c6f:twoWayRelayTimeSharing} with $0 \leq \alpha\leq 1$. During the first time-slot of duration $\left(1-\alpha\right)n$ dimensions user $1$ transmits alone to the RN using $C_{12}$. Call the information sequence $X^{(1)}_{12}$. Using standard random coding arguments, arbitrarily small error probability for detection of $X^{(1)}_{12}$ is achievable if
\begin{align}
R_{12}-R_{21} &<    \left(1-\alpha\right)\mathrm{I}\left(X_{12};Y_R\right) \notag \\
              &\leq \left(1-\alpha\right)\left(1-\mathcal{H}\left(\epsilon_R\right)\right).
\end{align}

During the second time-slot of duration $\alpha n$ dimensions, both users transmit their information sequences $X^{(2)}_{12}$ and $X_{21}$ which are codewords belonging to the \emph{same linear code} over GF(2), $C_c$. As a result the RN receives the modulo-2 sum of the two codewords which is itself a codeword in $C_c$.
Linear codes achieve the capacity of the BSC (see \cite{book:Gallager}) and thus an arbitrarily small average error probability for the detection of $X^{(2)}_{12} \oplus X_{21}$ is possible if
\begin{align}
R_{21} \leq \alpha (1-\mathcal{H}(\epsilon_R)).
\end{align}

The RN encodes with a two-dimensional codebook of cardinality $2^{nR_{12}}$, indexed by column and row pair $(i,j), i=1,\cdots,2^{n(R_{12}-R_{21})},j=1,\cdots,2^{nR_{21}} $. Column $i$ is used to encode $X^{(1)}_{12}$ and row $j$ to encode $X^{(2)}_{12}\oplus X_{21}$. At receiver 1 (weak receiver), $X_{12}$ is known and so the column $i$ of the transmitted codeword is known.  Arbitrarily small error probability is achievable for detection of $j$ or $X^{(2)}_{12}\oplus X_{21}$ (and consequently $X_{21}$) if
\begin{align}
R_{21} < \mathrm{I}(X_R; Y_1) \leq 1-\mathcal{H}(\epsilon_1).
\end{align}
At receiver 2 (strong receiver) $X_{21}$ is known. Arbitrarily small error probability for detection of $(i,j)$ (or $X_{12}$) is achievable if
\begin{align}
R_{12} < \mathrm{I}(X_R; Y_2) \leq 1-\mathcal{H}(\epsilon_2).
\end{align}

Consider first the ``strong relay'' case where $\epsilon_{R}\geq\max\left(\epsilon_{1},\epsilon_{2}\right)$. Here we let $\alpha=1$ so that both users can achieve $1-\mathcal{H}(\epsilon_{R})$. For the ``medium relay'' case where $\epsilon_{1}\geq\epsilon_{R}\geq\epsilon_{2}$ we choose $\alpha=\frac{1-\mathcal{H}\left(\epsilon_{1}\right)}{1-\mathcal{H}\left(\epsilon_{R}\right)}$, so that $R_{12}=1-\mathcal{H}\left(\epsilon_{R}\right)$ and $R_{21}=1-\mathcal{H}\left(\epsilon_{1}\right)$ are achievable. Finally in the ``weak relay'' case when $\epsilon_{1}\geq\epsilon_{2}\geq\epsilon_{R}$ we choose $\alpha=\frac{1-\mathcal{H}\left(\epsilon_{1}\right)}{1-\mathcal{H}\left(\epsilon_{2}\right)}$, resulting in $R_{12}=1-\mathcal{H}\left(\epsilon_{2}\right)$ and $R_{21}=1-\mathcal{H}\left(\epsilon_{1}\right)$.


\section{Conclusions} \label{c6sec:conclusions}

In this paper, we studied different coding strategies for the discrete memoryless TWR channels. Specifically, a novel relaying strategy based on random binning of the receives signals at the RN, e.g. hash-and-forward relaying, was presented which coincides with Shannon's inner-bound for the two-way channel under certain channel conditions. For binary adder two-way relay channel, it is shown that the outer bound is achievable by physical-layer network coding, utilizing group codes and partial decoding at the RN. Our current work examines the use of feedback in the coding strategy at the terminals, an issue that was neglected here.

\section*{Acknowledgment}
This work was partially supported by the European Commission's 7th framework programme under grant agreement FP7-257616 also referred to as CONECT.

\bibliographystyle{IEEEtran}
\bibliography{../erhan_biblio}

\end{document}